% ------------------------------------------------------------
% Document class
\documentclass[11pt,reqno]{amsart} % 'reqno' = right‑aligned equation numbers

% ------------------------------------------------------------
% Page geometry
\usepackage[margin=1.1in]{geometry}
\usepackage{afterpage,placeins,float} 
\usepackage[foot]{amsaddr} 
% ------------------------------------------------------------
% Fonts & micro‑typography
\usepackage[T1]{fontenc}
\usepackage{lmodern} % Latin Modern = clean Computer Modern look
\usepackage{microtype} % better kerning / protrusion
\usepackage{booktabs} % gives \toprule, \midrule, \bottomrule
\usepackage{siunitx} % decimal-aligned numbers and the \sisetup{} command
\usepackage{pgfplots} % loads TikZ automatically
\usepackage{url}
\usepackage{subcaption} 
\sisetup{
	round-mode = places,
	round-precision = 3,
	table-number-alignment = center
}
\usepackage{booktabs} 

% ------------------------------------------------------------
% Math packages
\usepackage{amsmath,amssymb,amsthm,mathtools}
\numberwithin{equation}{section} % (1.1), (1.2), …

% ------------------------------------------------------------
% Theorem / lemma / etc. formatting
\theoremstyle{plain}

% ------------------------------------------------------------
% Maths operators & shortcuts
\DeclareMathOperator{\Var}{Var}
\newcommand{\E}{\mathbb{E}}
\newcommand{\Prob}{\mathbb{P}}

% ------------------------------------------------------------

% ------------------------------------------------------------
% Lists
\usepackage{enumitem}
\setlist[itemize]{leftmargin=1.5em}
\setlist[enumerate]{leftmargin=1.5em}

% ------------------------------------------------------------
% Tables / figures (optional—comment out if unused)
\usepackage{booktabs}
\usepackage{graphicx}
\usepackage{subcaption}

% ------------------------------------------------------------

% ------------------------------------------------------------
% Reduce the spacing before/after displayed equations (optional)
\setlength{\abovedisplayskip}{8pt}
\setlength{\belowdisplayskip}{8pt}
% ------------------------------------------------------------

\usepackage[]{amsmath,amssymb,amsthm}
\usepackage{graphicx}
\usepackage{xcolor}
\newtheorem{claiminner}{Claim}

\newtheorem{lemma}{Lemma}[section] % Lemmas numbered within each section
 % Theorems numbered within each section
\newtheorem{definition}{Definition}[section]
\newtheorem{corollary}{Corollary}[section] 
\theoremstyle{remark}
\newtheorem*{remark}{Remark}

\title{Loss-Versus-Rebalancing under Deterministic and Generalized block-times}

\author[A.~Nezlobin]{Alex Nezlobin}
\author[M.~Tassy]{Martin Tassy}

\email{%
	\texttt{NezlobinAlexander@gmail.com},
	\texttt{mtassy@wilkinsrc.com}
}

\date{\today} % You can replace \today with a specific date if needed

\begin{document}
	
	\maketitle
	
\begin{abstract}
Although modern blockchains almost universally produce blocks at fixed intervals, existing models still lack an analytical formula for the loss-versus-rebalancing (LVR) incurred by Automated Market Makers (AMMs) liquidity providers in this setting. Leveraging tools from random walk theory, we derive the following closed-form approximation for the per block per unit of liquidity expected LVR under constant block time:

	\[
	\overline{\mathrm{ARB}}=
	\frac{\,\sigma_b^{2}}
	{\,2+\sqrt{2\pi}\,\gamma/(|\zeta(1/2)|\,\sigma_b)\,}+O\!\bigl(e^{-\mathrm{const}\frac{\gamma}{\sigma_b}}\bigr)\;\approx\; 
	\frac{\sigma_b^2}{\,2 + 1.7164\,\gamma/\sigma_b},
	\]
	where $\sigma_b$ is the intra-block asset volatility, $\gamma$ the AMM spread and $\zeta$ the Riemann Zeta function. Our large Monte Carlo simulations show that this formula is in fact quasi-exact across practical parameter ranges.
	
Extending our analysis to arbitrary block-time distributions as well, we demonstrate both that—under every admissible inter-block law—the probability that a block carries an arbitrage trade converges to a universal limit, and that only constant block spacing attains the asymptotically minimal LVR. This shows that constant block intervals provide the best possible protection against arbitrage for liquidity providers.

\end{abstract}

	\section{Introduction}
	
	Automated market makers (AMMs) have emerged as a cornerstone of the decentralized finance (DeFi) trading landscape, enabling assets to be exchanged without traditional order-book mechanisms. By aggregating liquidity from external liquidity providers (LPs) and using predetermined pricing formulas \cite{hanson2007logarithmic, othman2013automated, adams2020uniswap}, AMMs democratize market making. However, this convenience comes with inherent risks: LPs face an adverse selection cost due to their passive quoting of prices that can become stale between block confirmations. Arbitrageurs exploit price discrepancies between the AMM and external markets, resulting in a continuous loss of value from LP positions \cite{angeris2019analysis, capponi2025}. This phenomenon driven by the discrete timing of blockchain transactions has been recognized in practice as \emph{divergence loss} or \emph{impermanent loss}, reflecting the opportunity cost of providing liquidity instead of simply holding the assets \cite{angeris2019analysis}. 
	
	While impermanent loss captures the general underperformance of an LP relative to holding, it conflates multiple sources of risk and thus lacks specificity in attributing losses to stale pricing. To isolate the loss incurred purely from arbitrage on stale quotes, Milionis \emph{et al.} \cite{Milionis2022} introduced the concept of \emph{loss-versus-rebalancing} (LVR). LVR is defined as the shortfall of an LP’s portfolio value relative to a continuously rebalanced portfolio that tracks the AMM’s asset ratio, effectively quantifying the cost of offering liquidity due to arbitrageurs exploiting outdated quotes. Subsequent work by Milionis \emph{et al.} \cite{Milionis2023-poisson} derived closed-form expressions for LVR in the idealized case of Poisson-distributed block arrival times, corresponding to proof-of-work blockchains (e.g. Ethereum pre-merge). 
	
% ------------------------------------------------------------------
% Polished text — preserve user-supplied structure, elevate register
% ------------------------------------------------------------------

	However, modern blockchain networks are overwhelmingly based on proof-of-stake or other consensus mechanisms that produce blocks at fixed intervals (for example, Ethereum post-merge or Solana), rather than an exponential timing. Existing analytical results for LVR are not directly applicable to these settings, leaving a significant gap in our understanding of how the block-time distribution influences LVR. Filling this gap is crucial: as on-chain market making becomes more competitive, even modest reductions in LVR can substantially improve LP profitability and the overall efficiency of AMMs.

One of the main goals of this paper is to develop the mathematical methodology to address this shortfall. Continuous diffusions and Itô calculus, the traditional tools for AMM analysis, are ill-suited for non-exponential block times. While diffusion models capture certain macro properties, they obscure the sharp boundary effects of discrete block production, leading to inaccurate micro-level LVR estimates. On a blockchain, price evolution is better modeled as a discrete random walk, requiring a new mathematical framework to capture its granular behavior.

We address this gap by developing a discrete-time, continuous-state Markov chain model of an AMM that supports arbitrary block-time distributions, with a focus on the deterministic block-time case. Evolving block-by-block, the model reflects AMM state dynamics and provides tractable expressions for LP losses.
Analytically, our results leverage the vast probabilistic apparatus of random walks on strips. This framework was initially developed in queueing theory (\cite{Kendall1951,Kingman1961}) and later refined by Spitzer’s fluctuation theory \cite{Spitzer1956}, ladder-height techniques (\cite{ChangPeres1997,nagaev2010exact,fuh2007asymptotic}), and boundary crossing results for strips (\cite{Lotov1996,lotov2023exact,khaniyev2004asymptotic,Khaniyev2022})
By transplanting these tools to the AMM context, we derive analytic LVR formulas under general block-time laws and, in the uniformly spaced case, obtain a closed-form expression whose error term decays exponentially with the intra-block volatility of the asset. Monte-Carlo experiments confirm that the resulting approximation is
remarkably accurate, thereby extending the theoretical foundation of
LVR to the contemporary blockchain systems.

	\emph{Summary of contributions.} Our main findings are summarized as follows:
	\begin{enumerate}
		\item We establish a general decomposition of LP arbitrage losses that holds for any block-time distribution, generalizing a structure which was first observed by Milionis \emph{et al.} (2023) in the special case of Poisson-distributed block times \cite{Milionis2023-poisson}. In particular, we show that the expected arbitrage loss per block can be factored as $\overline{ARB} = P_{\mathrm{trade}} \times \overline{LVR}$, where $P_{\mathrm{trade}}$ is the probability that an arbitrage trade occurs in a given block and $\overline{LVR}$ is the expected loss conditional on an arbitrage trade occurring.
		
		\item For uniformly spaced blocks, we obtain an analytic expression for $\overline{LVR}$ in the small-volatility regime, accurate up to an exponentially small error term. This formula is practically exact for typical market parameters (yielding error $<0.01\%$ in simulations). These results are presented in Table~\ref{table2}. Moreover, our results indicate that, all else equal, moving from a Poisson block-time to a constant block-time reduces the per-block LVR by up to $17.4\%$ in the fast-block regime (see Figure~\ref{fig:poissonuniform} and Figure~\ref{fig:rho_diff_plot}).
		
			\item We find that the asymptotic arbitrage probability $P_{\mathrm{trade}}$ (in the limit of small per-block price volatility) is universal, i.e., to first order 
		it does not depend on the block-time distribution. We derive an explicit formula for $P_{\mathrm{trade}}$ in this regime, revealing a fundamental invariance in arbitrage frequency across different block arrival processes. In contrast, the magnitude of the loss per arbitrage event, $\overline{LVR}$, does depend on the block-time distribution. As an important implication, we show that among all block-time distributions with a given mean, the constant distribution uniquely minimizes the asymptotic LVR. In other words, there is no better choice of the block distribution that could reduce the adverse selection cost for LPs.
		\end{enumerate}
		
\begin{table}[htbp]
	\centering
	\renewcommand{\arraystretch}{1.45}
	\caption{Asymptotic expressions for $P_{\mathrm{trade}}$, $\overline{LVR}$,
		and $\overline{ARB}$ under different block-time distributions, with
		the leading-order error term shown separately.}
	\label{tab:results_summary}
	\begin{tabular}{lcccc}
		\toprule
		& $P_{\mathrm{trade}}$ & $\overline{LVR}$ & $\overline{ARB}$ & Error \\
					\addlinespace[2pt]
		\midrule
				\addlinespace[4pt]
		%--------------------------- Poisson -------------------------------
		
		Poisson &
		$\displaystyle \bigl[1+\gamma/(\sqrt{2}\sigma_b)\bigr]^{-1}$ &
		$\displaystyle \frac{\ell\sigma_b^{2}}{2}$ &
		$\displaystyle \frac{\ell\sigma_b^{2}}
		{\,2+\sqrt{2}\gamma/\sigma_b\,}$ &
		$0$ \\[2pt]
		\addlinespace[7pt]
		%--------------------------- Uniform -------------------------------
		Uniform &
		$\displaystyle
		\bigl[\gamma/(\sqrt{2}\sigma_b)+|\zeta(1/2)|/\sqrt{\pi}\bigr]^{-1}$ &
		$\displaystyle
		\frac{\ell\sigma_b^{2}\,|\zeta(1/2)|}{2\sqrt{\pi}}$ &
		$\displaystyle
		\frac{\ell\sigma_b^{2}}
		{\,2+\sqrt{2\pi}\gamma/(|\zeta(1/2)|\sigma_b)\,}$ &
		$O\!\bigl(e^{-c\frac{\gamma}{\sigma_b}}\bigr)$ \\[2pt]
		\addlinespace[9pt]
		%--------------------------- General -------------------------------
		General &
		$\displaystyle \sqrt{2}\sigma_b/\gamma$ &
		$\frac{\ell\sigma_b^{2}}{2}\!\bigl(\frac{|\zeta(1/2)|}{\sqrt{\pi}}+C_\mu\bigr)$ &
		$\frac{\ell\sigma_b^{3}}{\sqrt{2}\gamma}\left(\frac{|\zeta(1/2)|}{\sqrt{\pi}}+C_{\mu}\right)$ &
		$O\!\bigl(\sigma_b/\gamma\bigr)$ \\
		\bottomrule
	\end{tabular}
\end{table}

	Table~\ref{tab:results_summary} summarizes these analytical results for the representative cases of Poisson and constant block-time, as well as for the general asymptotic case. In this table, $\sigma_b$ denotes the volatility of the asset over a block, $\gamma$ denotes the AMM’s internal spread parameter, $\ell$ denotes the liquidity provided per percentage point of price change, $\zeta(\cdot)$ denotes the Riemann zeta function and $C_\mu$ is a non-negative constant depending on the reference block-distribution.

	The remainder of the paper is organized as follows. \textbf{Section~2} introduces our discrete-time Markov chain model and derives general analytical results for LVR, forming the foundation for the subsequent analysis. \textbf{Section~3} focuses on the constant block-time case (characteristic of  proof-of-stake blockchains), where we derive closed-form expressions for key quantities such as the arbitrage probability and the expected LVR. \textbf{Section~4} extends the analysis to arbitrary block-time distributions, establishing distribution-invariant properties and showing that the constant block-time distribution uniquely minimizes asymptotic LVR among all block-time distributions with a given mean. Finally, we close with brief remarks and outline several directions for future research.

	% ---------------------------------------------------------------
	% Float-only page: TABLE 2 (top) + FIGURE 1 (bottom)
	% ---------------------------------------------------------------
	\afterpage{%
%		\clearpage % start a fresh page and flush earlier floats
		
		% ======================= TABLE 2 ===========================
		\begin{table}[H]
			\centering
			\caption{Relative deviation (in percent) between $10^{9}$-path
				Monte-Carlo estimates (Std. error $\approx$ 0.01 \%) and the theoretical expression for $\rho_b=\frac{\gamma}{\sigma_b}$ smaller than $5$. Here $\Delta_{P_{\mathrm{trade}}}$ is the error for $P_{\mathrm{trade}}$, $\Delta_{\overline{LVR}}$ for $\overline{LVR}$, and $\Delta_{\overline{ARB}}$ for the ratio $\overline{ARB}$. 
				The data illustrates that the theoretical and Monte Carlo values of $\overline{ARB}$ become indistinguishable for $\sigma_b \le \gamma$.}
			\label{table2}
			\sisetup{
				table-number-alignment = center,
				round-mode = places,
				round-precision = 3,
				table-format = +1.3
			}
			\begin{tabular}{
					S[table-format=2.1]
					S
					S
					S
				}
				\toprule
				{$\rho_b$} &
				{$\Delta_{P_{\mathrm{trade}}}$} &
				{$\Delta_{\overline{LVR}}$} &
				{$\Delta_{\overline{ARB}}$} \\[-3pt]
				& \multicolumn{3}{c}{}\\
				\midrule
				0.5 & 5.138 & 5.657 & 0.493 \\
				0.7 & 2.568 & 2.823 & 0.248 \\
				0.8 & 1.722 & 1.882 & 0.158 \\
				0.9 & 1.108 & 1.179 & 0.070 \\
				1.0 & 0.654 & 0.668 & 0.014 \\
				2.0 & -0.076 & -0.093 & -0.017 \\
				3.0 & 0.003 & 0.011 & 0.009 \\
				4.0 & 0.005 & 0.005 & 0.000 \\
				5.0 & 0.003 & 0.008 & 0.005 \\
				\bottomrule
			\end{tabular}
		\end{table}
		\vspace{2cm}
		% ======================= FIGURE 1 ==========================
		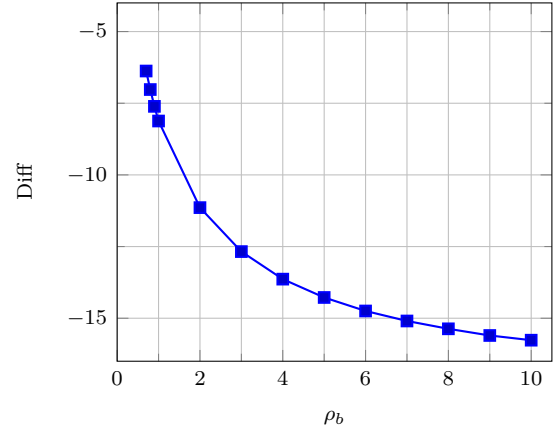
\begin{figure}[H]
			
			\centering
			
			% ---------- LEFT sub-table ----------
			\begin{subfigure}[t]{.46\linewidth}
				\centering
				\captionsetup{justification=centering}
				\caption{Relative difference between simulation and\\
					Poisson-block formula \\[-2pt]
					$(\mathrm{ARB}_{\mathrm{sim}}-\mathrm{ARB}_{\mathrm{Pois}})/\mathrm{ARB}_{\mathrm{Pois}}\;$[\%]}
				\label{fig:poissonuniform}
				{%
					\sisetup{table-format=+2.3}
				\begin{tabular}{S[table-format=2.1] @{\hspace{1.5em}} S}
					\toprule
					{$\rho_b$} & {Diff\%} \\
					\midrule
					0.5  & -4.820  \\ 
					0.7  & -6.376  \\ 
					0.8  & -7.022  \\ 
					0.9  & -7.609  \\ 
					1.0  & -8.120  \\ 
					2.0  & -11.141 \\ 
					3.0  & -12.675 \\ 
					4.0  & -13.636 \\ 
					5.0  & -14.276 \\ 
					6.0  & -14.747 \\ 
					7.0  & -15.096 \\ 
					8.0  & -15.369 \\ 
					9.0  & -15.602 \\ 
					10.0 & -15.766 \\ 
					\bottomrule
				\end{tabular}
				}
			\end{subfigure}
			\hfill
			% ---------- RIGHT sub-plot ----------
			\begin{subfigure}[t]{.46\linewidth}
				\centering
				\captionsetup{justification=centering}
				\caption{Plot of the Constant vs.\ Poisson $\mathrm{ARB}$ improvement\\[-2pt]
					as a function of $\rho_b$ (in \%)}
				\label{fig:rho_diff_plot}
				\begin{tikzpicture}
					\begin{axis}[
						width=\linewidth,
						xmin=0, xmax=10.5,
						ymin=-16.5, ymax=-4,
						xlabel={$\rho_b$},
						ylabel={Diff},
						grid=both,
						minor tick num=1,
						tick label style={font=\scriptsize},
						label style={font=\scriptsize},
						]
						\addplot+[mark=square*, mark size=2pt, thick] table[row sep=\\]{
							n diff
							0.5 -4.820\\
							0.7 -6.376\\
							0.8 -7.022\\
							0.9 -7.609\\
							1.0 -8.120\\
							2.0 -11.141\\
							3.0 -12.675\\
							4.0 -13.636\\
							5.0 -14.276\\
							6.0 -14.747\\
							7.0 -15.096\\
							8.0 -15.369\\
							9.0 -15.602\\
							10.0 -15.766\\
						};
					\end{axis}
				\end{tikzpicture}
			\end{subfigure}
			
			\caption{Deviation of the Constant-Block $\mathrm{ARB}$ from the Poisson-blocks benchmark as a function of $\rho_b$. When $\rho_b$ decreases, the percentage gain increases towards an asymptotic limit of $\approx 17.4 \%$.}
			\label{fig:rho_table_and_plot}
		\end{figure}
		
		\clearpage % ensure subsequent text starts on a new page
	}

	\section{Markov Chain Model}
	
	 We now present the discrete-time, continuous-state Markov chain model that will be used to analyze the impact of the block-time distribution on LVR. Our model offers a tractable alternative to the continuous-time diffusion models typically employed, while still capturing the essential system dynamics.

	\subsection{Model Specification and Dynamics of the Markov Chain}

	We consider a blockchain where block generation follows a random block-time distribution. Let $\mu$ be a base distribution on $[0,\infty)$ with finite first and second moments (we assume $\mathbb{E}[\mu]=1$ for normalization). For a given average block-time $t>0$, denote by $\mu_t$ the distribution of $tX$ when $X\sim \mu$ (so that $\mu_t$ has mean $t$). Accordingly, we model the sequence of block-times as an i.i.d. sequence $(U_i)_{i\in\mathbb{N}}$ with $U_i \sim \mu_t$ for each $i$. One particular case of interest is the \emph{constant block-time model}, where $\mu_t$ is a Dirac delta at $t$ (i.e., each block has exactly length $t$). This deterministic block-time scenario, typical of proof-of-stake systems~\cite{Buterin2020}, will be a focal case in our analysis.
	
	We assume the asset price $S_t$ follows a geometric Brownian motion with zero drift and volatility $\sigma$. In other words, over an interval of length $\Delta$, the \emph{log-price} change $\ln(S_{t+\Delta}/S_t)$ is $\mathcal{N}(0,\sigma^2\Delta)$. (We restrict attention to the driftless case; incorporating a non-zero drift for $S_t$ is left for future work.) 
	
	The AMM is assumed to have a constant \emph{internal spread} $\gamma$, and the liquidity provider supplies liquidity with constant density $\ell$ (USD per percentage point of price change). In other words, an arbitrageur must trade $\ell \cdot p$ USD worth of asset to shift the AMM’s quoted price by $p\%$. 
	
	Our Markov chain model tracks the log-price at block boundaries relative to two fixed arbitrage thresholds. We set the initial log-price (at the start of block 1) to 0 without loss of generality. We then define the no-arbitrage region to be the interval $[0,\gamma]$ in log-price space. If the log-price remains between 0 and $\gamma$, no arbitrage occurs; if it moves outside this interval (either above $\gamma$ or below 0), an arbitrage opportunity is realized. Moreover, conditional on a block of length $U$, the log-price change in that block is $\Delta \ln S \sim \mathcal{N}(0,\sigma^2 U)$. It is convenient to introduce a time-scaled volatility $\sigma_b := \sigma\sqrt{t}$ (the standard deviation of log-price changes over a block of average length $t$). Then $\Delta \ln S$ has variance $\sigma_b^2 \cdot (U/t)$, and in the special case of deterministic block length $t$ we simply have $\Delta \ln S \sim \mathcal{N}(0,\sigma_b^2)$. Another reason why we prefer adopting the notation $\sigma_b$ rather than the usual $\sigma\sqrt{t}$  is that all our results will be independent of the exact way the asset variance is scaling as long as the block distribution can be written as $\sigma_b\mu$.
	
	To simplify the analysis, we normalize the price increments and the arbitrage bounds. Define the normalized log-return $X_i := \Delta \ln S_i / \sigma_b$ for block $i$, and replace $\gamma$ by $\rho_b := \gamma/\sigma_b$. Under an average block time $t$, each $X_i$ can be treated as $\mathcal{N}(0,\mu)$ (i.i.d.), so the log-price movement per block is measured in units of $\sigma_b$. This rescaling allows us to work with a fixed interval $[0,\rho_b]$ for the log-price and standardizes the variance of the jumps $X_i$. An important consequence of our setup is that, since liquidity per basis point is constant, we can recenter the log-price after each arbitrage event. In particular, whenever an arbitrage occurs, we reset the reference log-price to 0 at the start of the next block (and keep the arbitrage bounds $0$ and $\rho_b$ unchanged). This implies that we only need to track cumulative percentage losses over time, as the process between arbitrage events has identical dynamics in this relative frame.
	
	We now formalize the Markov chain. The state at the end of block $n$ is given by the tuple $(M_n,\; LVR_n,\; ARB_n)$, where: 
	\begin{itemize}
		\item $M_n$ is the \emph{relative log-price position} within the interval $[0,\rho_b]$ at the end of block $n$. (Here $M_n=0$ or $M_n=\rho_b$ indicates that the log-price is exactly at an arbitrage threshold, while $0<M_n<\rho_b$ means it lies strictly inside the no-arbitrage region.)
		\item $LVR_n$ is the \emph{Loss-Versus-Rebalancing} incurred by the LP during block $n$ (measured in USD). By definition, $LVR_n=0$ if no arbitrage occurred in block $n$, and $LVR_n>0$ if an arbitrage took place.
		\item $ARB_n$ is the \emph{cumulative LVR} up to and including block $n$, i.e. the total loss in USD the LP has suffered from all arbitrage events up to block $n$. We have $ARB_0=0$ at inception.
	\end{itemize}
	
	State transitions from block $n$ to $n+1$ are governed by whether an arbitrage is triggered in block $n+1$. Denote by $X_{n+1}$ the normalized log return in block $n+1$ (as defined above):
	\begin{enumerate}
		\item \emph{No arbitrage:} If $M_n + X_{n+1} \in [0,\rho_b]$, then the log-price remains within the no-arbitrage bounds during block $n+1$. In this case, no new loss is incurred: 
		\[
		M_{n+1} = M_n + X_{n+1}, \qquad 
		LVR_{n+1} = 0, \qquad 
		ARB_{n+1} = ARB_n~.
		\]
		\item \emph{Arbitrage event:} If $M_n + X_{n+1} \notin [0,\rho_b]$, then the log-price crosses one of the thresholds in block $n+1$, signaling an arbitrage opportunity. Let $b$ be the boundary of $[0,\rho_b]$ that is exceeded by $M_n + X_{n+1}$ (so $b=0$ if $M_n + X_{n+1}<0$, or $b=\rho_b$ if $M_n + X_{n+1} > \rho_b$). 
		We define the \emph{overshoot} beyond the boundary as $|\,M_n + X_{n+1} - b\,|$. Assuming a constant liquidity $\ell$, an arbitrage triggered by a price deviation $d$ (the difference between the AMM's quoted price and the external price) incurs a cost to the LP of $\int_0^d \ell\,u\,du = \frac{\ell d^2}{2}$. Therefore the LVR incurred in block $n+1$ can be expressed as 
		\[
		LVR_{n+1} = \frac{\ell}{2}\left(\,M_n + X_{n+1} - b\,\right)^2,
		\] 
		 After this arbitrage, we reset the state for the next block by setting 
		\[
		M_{n+1} = 0, \qquad ARB_{n+1} = ARB_n + LVR_{n+1}~,
		\] 
		i.e. the new relative log-price position starts again at 0, and the cumulative loss is updated.
	\end{enumerate}
	
\subsection{Arbitrage Probability and Average LVR Decomposition}

Leveraging standard concentration results for Markov chains. We now propose a simple but useful decomposition of the LVR.

Consider the asymptotic loss per block (long-run loss per block experienced by the LPs):
\[ 
\overline{\mathrm{ARB}} \;=\; \lim_{N \to \infty} \frac{ARB_N}{N}\,. 
\] 
Here $ARB_N$ denotes the total accumulated LVR (loss) up to block $N$. Let $\theta_N$ be the number of arbitrage trades that occur by the end of block $N$, and let $\tau_N$ be the index of the block in which the $N$-th arbitrage takes place. Since arbitrage losses occur only at the blocks $\tau_i$, the total loss by time $N$ can be expressed as 
\[ 
ARB_N \;=\; \sum_{i=1}^{\theta_N} LVR_{\tau_i}\,, 
\] 
where $LVR_{\tau_i}$ is the loss incurred at the $i$-th arbitrage event. It follows that 
\[ 
\frac{ARB_N}{N} \;=\; \frac{\theta_N}{N} \cdot \frac{1}{\theta_N}\sum_{i=1}^{\theta_N} LVR_{\tau_i}\,. 
\]

As $N \to \infty$, provided that arbitrage opportunities continue indefinitely (so that $\theta_N \to \infty$ almost surely), we can apply the Strong Law of Large Numbers (SLLN) to each factor on the right-hand side. In particular, this gives, almost surely,
\[ 
\lim_{N \to \infty}\frac{\theta_N}{N} \;=\; \frac{1}{\mathbb{E}[\tau_1]}\,,
\] 
and 
\[ 
\lim_{N \to \infty}\frac{1}{\theta_N}\sum_{i=1}^{\theta_N} LVR_{\tau_i} \;=\; \mathbb{E}\!\big[\,LVR_{\tau_1}\big]\,.
\] 
Combining these two limits, we obtain the asymptotic decomposition 
\begin{equation}\label{eq:decomp}
	\overline{\mathrm{ARB}} \;=\; P_{\text{trade}} \times \overline{\mathrm{LVR}}\!,
\end{equation} 
which was first observed by Milionis \emph{et al.}\cite{Milionis2023-poisson} in the special case of Poisson-distributed block times. In this factorization, $P_{\text{trade}} = \frac{1}{\mathbb{E}[\tau_1]}$ represents the long-run probability that a given block contains an arbitrage trade (the arbitrage frequency per block), and $\overline{\mathrm{LVR}} = \mathbb{E}\!\big[\,LVR_{\tau_1}\big]$ denotes the average loss per arbitrage event. This decomposition is practical because it separates the frequency of arbitrage opportunities from their average magnitude, allowing each component to be analyzed independently.
\subsection{Relating LVR to Ladder Heights and Overshoot Variables}

\noindent
Interpreting price updates as a random walk allows us to tap into results from renewal theory and fluctuation theory. Central to these is the concept of \emph{ladder heights}, which quantify the extent of new extrema in a random walk. We begin by defining this formally, as it will play a key role in our analysis:

\begin{definition}[Ladder Height]
	Let $(X_n)$ be a sequence of i.i.d. random variables with finite 
	variance. For $k > 0$, let $S_k := X_1 + \cdots + X_k$ be the partial sums 
	associated with the $X_i$'s, and let:
	\[
	\tau_0 := \inf\{k : S_k < 0\}
	\]
	be the first passage time to the negative half-line. The random 
	variable:
	\[
	H := -S_{\tau_0}
	\]
	is the ladder height associated with $(X_n)$.
\end{definition}

In simpler terms, the ladder height $H$ is the absolute value of the random walk's partial sum when it first becomes negative. Let $h_1$ and $h_2$ denote the first and second moments of $H$, respectively. 
We now introduce additional random variables to facilitate our calculations and to leverage well-established results on ladder heights. We embed the Markov chain in a natural i.i.d. random walk $(X_n)_{n \in \mathbb{N}}$ and define $\tau$ as the first step at which an arbitrage occurs, i.e.,
\begin{equation*}
	\tau := \inf\{n > 0: X_n \notin (0, \rho_b)\}.
\end{equation*}

Additionally, define the stopping times:
\begin{align*}
	\tau_0 &:= \inf\{n > 0: X_n < 0\}, \\
	\tau_{\rho_b} &:= \inf\{n > 0: X_n > \rho_b\}.
\end{align*}

In particular, $P_{\mathrm{trade}} = \frac{1}{\mathbb{E}[\tau]}$. Furthermore, we can express the expected LVR as:
\begin{align}\label{eq:lvr_basic}
\overline{\mathrm{LVR}} = \frac{\ell}{2}\mathbb{E}\!\left[ S_{\tau}^2 
\mathbf{1}_{\{\tau=\tau_0\}} + (S_{\tau}-\rho_b)^2 
\mathbf{1}_{\{\tau=\tau_{\rho_b}\}} \right].
\end{align}

Let $p = \mathbb{P}(\tau=\tau_{\rho_b})$ be the probability that the random walk hits the upper bound $\rho_b$ before dropping below $0$. We introduce the following auxiliary random variables:

\begin{itemize}
	\item $H$, the ladder height of the full random walk, is $-S_{\tau_0}$; it has first moment $h_1$ and second moment $h_2$.
	\item The right overshoot $R$ is defined as $R = \frac{(S_{\tau} - \rho_b)\mathbf{1}_{\{\tau=\tau_{\rho_b}\}}}{p}$; it has first moment $r_1$ and second moment $r_2$.
	\item The left overshoot $L$ is defined as $L = \frac{-S_{\tau}\mathbf{1}_{\{\tau=\tau_0\}}}{\,1-p\,}$; it has first moment $l_1$ and second moment $l_2$.
	\item The leftover left overshoot $O$ is defined as $O = \frac{-S_{\tau}\mathbf{1}_{\{\tau=\tau_{\rho_b}\}}}{p}$; it has first moment $o_1$ and second moment $o_2$.
\end{itemize}

$L$ and $R$ capture the overshoot distribution when the chain exits the interval $(0,\rho_b)$ through the left and right side, respectively, while $O$ captures the difference between the distributions of $H$ and $L$ (note that $L$ is used only as an intermediate quantity and will not appear in the final results). 
From these definitions, we derive the following relationships:
\begin{align}
	h_1 &= l_1(1-p) + o_1 p \label{eq:h1_def} \\
	h_2 &= l_2(1-p) + o_2 p \label{eq:h2_def} 
\end{align}

By Wald's identities (classical results in Markov chain theory~\cite{Wald1944}), we have:
\begin{align*}
	\mathbb{E}[X_1]\,\mathbb{E}[\tau] &= \mathbb{E}[S_\tau] \\
	\mathrm{Var}(X_1)\,\mathbb{E}[\tau] + (\mathbb{E}[X_1])^2\,\mathbb{E}[\tau^2] &= \mathbb{E}[S_\tau^2] 
\end{align*}
In our setting, $\mathbb{E}[X_1] = \mathbb{E}[\mathbb{E}[X_1 \mid U_1]] = 0$ regardless of $\mu$, since $X_1$ has zero drift. Using the law of total variance, we also get:
\begin{align*}
	\mathrm{Var}(X_1) &= \mathbb{E}[\mathrm{Var}(X_1 \mid U_1)] + \mathrm{Var}(\mathbb{E}[X_1 \mid U_1]) \\
	&= \mathbb{E}[U_1] + 0 = 1
\end{align*}
Hence, the two Wald's identities simplify to:
\begin{align}
	0 &= \mathbb{E}[S_\tau], \tag{$*$}\label{eq:zero_rel_pre}\\
	\mathbb{E}[\tau] &= \mathbb{E}[S_\tau^2]. \tag{$**$}\label{eq:Etau_pre}
\end{align}
Rewriting these equations in terms of the overshoot variables yields:
\begin{align}
	0 &= -\,l_1(1-p) + (\rho_b + r_1)p, \label{eq:zero_rel}\\
	\mathbb{E}[\tau] &= l_2(1-p) + (\rho_b^2 + 2\rho_b r_1 + r_2)p. \label{eq:Etau_def}
\end{align}

We now solve the system of four equations \eqref{eq:h1_def}, \eqref{eq:h2_def}, \eqref{eq:zero_rel}, and \eqref{eq:Etau_def}. From \eqref{eq:zero_rel}, we obtain $l_1(1-p) = (\rho_b + r_1)p$. Substituting this into \eqref{eq:h1_def} yields:
\begin{align*}
	h_1 &= (\rho_b+r_1)p + o_1 p \\
	&= (\rho_b + r_1 + o_1)p~,
\end{align*}
so solving for $p$ gives:
\begin{equation}
	p = \frac{h_1}{\rho_b + r_1 + o_1} \label{eq:p_solution}
\end{equation}
Next, we express $\mathbb{E}[\tau]$ in terms of $h_2$ and $p$. Substituting $l_2(1-p)$ from \eqref{eq:h2_def} into \eqref{eq:Etau_def} yields:
\begin{align}
	\mathbb{E}[\tau] &= (h_2 - o_2 p) + (\rho_b^2 + 2\rho_b r_1 + r_2)p \nonumber \\
	&= h_2 + (\rho_b^2 + 2\rho_b r_1 + r_2 - o_2)p \label{eq:Etau_intermediate}
\end{align}
Finally, substituting \eqref{eq:p_solution} into \eqref{eq:Etau_intermediate} gives:
\begin{equation}\label{eq:main_tau}
	\mathbb{E}[\tau] \;=\; h_2 \;+\; h_1\, 
	\frac{\rho_b^2 + 2\,\rho_b\,r_1 + r_2 - o_2}{\,\rho_b + r_1 + o_1\,}\,.
\end{equation}

This gives us an expression for $P_{\mathrm{trade}}$ that is independent of $p$, $l_1$, and $l_2$.

In the Poisson block-time case, the memoryless property of the exponential distribution implies that $H$, $R$, and $O$ are all exponentially distributed with parameter $\sqrt{2}$. Consequently, $h_1 = r_1 = o_1 = 1/\sqrt{2}$ and $h_2 = r_2 = o_2 = 1$. Under these conditions, equation~\eqref{eq:main_tau} simplifies to $\mathbb{E}[\tau] = 1 + \frac{\rho_b}{\sqrt{2}}$, and hence $P_{\mathrm{trade}} = \frac{1}{\,1 + \frac{\rho_b}{\sqrt{2}}\,}$. (This closed-form was first noted in \cite{Milionis2023-poisson}.)

Next, we express $\overline{\mathrm{LVR}}$ using the quantities defined above. 
Rewriting equation \eqref{eq:lvr_basic}, $\overline{\mathrm{LVR}}$ can be expressed in a simplified form as:
\begin{equation*}
	\overline{\mathrm{LVR}} = \ell \sigma_b^2 \frac{\mathbb{E}[L^2](1-p) + \mathbb{E}[R^2]p}{2} = \ell \sigma_b^2 \frac{l_2(1-p) + r_2p}{2}
\end{equation*}

Combining this with \eqref{eq:h2_def}, we obtain:
\begin{align}
	\overline{\mathrm{LVR}} &= \frac{\ell \sigma_b^2 \left[h_2 + (r_2 - o_2)p\right]}{2} \nonumber \\
	&= \ell \sigma_b^2 \left[\frac{h_2}{2} + \frac{h_1(r_2 - o_2)}{2(\rho_b + r_1 + o_1)}\right] \label{eq:exp_arb}
\end{align}

Together, these results provide a powerful framework for analyzing LVR if we are able to understand well enough overshoots statistics. In the following sections, we will apply known results on random walks on strip as well as new results to derive closed-form expressions for $P_{\mathrm{trade}}$, $\overline{\mathrm{LVR}}$, and $\overline{\mathrm{ARB}}$.

\section{LVR and Probability of Trade in the constant block-time setting}

In this section, we shift our focus to the constant block-time case, where blocks are produced at fixed, deterministic intervals. By employing concentration results on overshoot distributions derived from Wiener-Hopf factorization, we obtain closed-form approximations that are exponentially accurate in the intra-block volatility. These results offer surprisingly practical and accurate formulas for real-world LVR analysis.

\begin{lemma}[Convergence of Overshoot Variables]\label{lem:Lotov}
	Consider the constants 
	\[
	\kappa = \frac{\lvert\zeta(1/2)\rvert}{\sqrt{2\pi}}, \qquad 
	\omega = \frac{1}{4} + \kappa^2\,.
	\] 
It follows from the work of Lotov~\cite{Lotov1996} that for a constant 
block-time distribution, there exists $c>0$ such that: 
\[
o_1 = \kappa + O(e^{-c\rho_b}), \qquad r_1 = \kappa + O(e^{-c\rho_b})\,,
\] 
\[
o_2 = \omega + O(e^{-c\rho_b}), \qquad r_2 = \omega + O(e^{-c\rho_b})\,.
\] 
\end{lemma}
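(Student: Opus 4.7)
The plan is to invoke Lotov's 1996 asymptotic theory for first-exit problems on a strip, which shows that strip-overshoot distributions approach their half-line (single-boundary) counterparts with an exponentially small correction, provided Cramér's condition on the step distribution holds. The constants $\kappa$ and $\omega$ are then identified as the first two moments of the ascending ladder height of a standard Gaussian random walk, and the result follows by moment transfer.

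The first step is to verify that Lotov's hypothesis applies in the constant block-time model. Here the normalized increments $X_i$ are i.i.d.\ standard Gaussian, so Cramér's condition---existence of $\mathbb{E}[e^{s X_1}]$ in a neighborhood of $s=0$---holds trivially, together with all polynomial moments and the lattice-free absolute continuity required by Lotov.

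The second step is to identify the limit constants. The value $\kappa = |\zeta(1/2)|/\sqrt{2\pi}$ is Spitzer's classical identity for the mean ascending ladder height of the Gaussian walk, obtained from the Spitzer--Baxter formula $\log \mathbb{E}[e^{s H^{+}}] = \sum_{n\ge 1} n^{-1}\,\mathbb{E}[(e^{sS_n}-1)\mathbf{1}_{S_n>0}]$ by differentiating at $s=0$ and applying zeta-regularization to the divergent series $\sum_{n\ge 1} n^{-1/2}/\sqrt{2\pi}$. The value $\omega = \tfrac14 + \kappa^{2}$ arises from the second-order expansion of the same Spitzer--Baxter transform: after subtracting $(\log\mathbb{E}[e^{sH^{+}}])^{2}/2$ to isolate $h_{2}-h_{1}^{2}$, the Gaussian integrals regularize to $\tfrac14$, reflecting the Wiener--Hopf split of the unit increment variance between ascending and descending ladder processes.

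The third and fourth steps apply Lotov's strip theorem to $R$ and to $O$. The right overshoot $R$ is precisely the conditional overshoot above the upper boundary $\rho_b$ given $\tau=\tau_{\rho_b}$, so Lotov's estimates directly yield $r_{1}=\kappa+O(e^{-c\rho_b})$ and $r_{2}=\omega+O(e^{-c\rho_b})$. For the auxiliary variable $O$, I would apply the strong Markov property at $\tau_{\rho_b}$: after the upper exit, the walk restarts at a level $\ge \rho_{b}$ and must descend to first cross $0$ from above; the signed distance below $0$ at that crossing is the ladder-type overshoot for a half-line first passage initiated very far above the barrier, and a second application of Lotov (together with symmetry of the Gaussian step) gives the same $\kappa$ and $\omega$ limits with the same exponential accuracy.

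The principal obstacle is the algebraic Step 2, specifically the evaluation $\omega = \tfrac14 + \kappa^{2}$: the $\kappa$-part is classical, but tracking the constant $\tfrac14$ in the second-order expansion of the Spitzer--Baxter generating function requires careful regularization of conditionally convergent Mellin-type sums. A secondary technical point is the matching of the auxiliary variable $O$ to a genuine half-line ladder quantity in Step 4, which must be justified via a clean strong-Markov decomposition at $\tau_{\rho_b}$ so that Lotov's theorem can be reused on the descending excursion.
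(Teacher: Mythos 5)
Your high-level strategy---invoke Lotov's strip-exit theorem for $R$, then reuse the same machinery on the post-$\tau_{\rho_b}$ descending excursion to treat $O$ via the strong Markov property---is the same as the paper's. However, there are two genuine gaps.

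First, your identification of the constants in Step~2 is incorrect. The quantity $\kappa = |\zeta(1/2)|/\sqrt{2\pi}\approx 0.583$ is \emph{not} Spitzer's formula for the mean of the ascending ladder height of a standard Gaussian walk; Spitzer's result gives $h_1 = 1/\sqrt{2}\approx 0.707$, which is numerically different. The limiting distribution $F$ appearing in Lotov's theorem is the asymptotic (stationary) overshoot distribution, i.e.\ the residual-life law of the ladder-height renewal process, whose first moment is $h_2/(2h_1)$; this equals $\kappa$ because $h_2 = |\zeta(1/2)|/\sqrt{\pi}$. Likewise, $\omega$ is the second moment of that residual distribution (proportional to $h_3/h_1$), not $h_2$ nor a variance decomposition of $H$. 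Your proposed Spitzer--Baxter computation would yield the ladder-height moments $h_1,h_2$, which are different numbers from $\kappa,\omega$, so the claimed verification $\omega = \tfrac14+\kappa^2$ would not come out of that calculation.

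Second, you invoke ``moment transfer'' without confronting the uniformity problem, which is the actual technical content of the proof. Lotov's estimate, as used here, has error of the form $e^{-c\min(\,\cdot\,,\,\rho_b-\,\cdot\,)}$, which degrades to $O(1)$ near the endpoints of the strip, so the convergence is not uniform over the domain of integration when you write $r_1 = \int_0^\infty \Prob(R>x)\,dx$ and $r_2 = 2\int_0^\infty x\,\Prob(R>x)\,dx$. The paper closes this gap by restricting to a ``good event'' on which the walk visits the middle third $[\rho_b/4,3\rho_b/4]$ and restarting there by the strong Markov property, which yields a uniform error $e^{-c\rho_b/4}$; the complementary event is then shown to have probability $O(e^{-c'\rho_b^2})$ via a single-big-jump argument combined with the tail estimate $\Prob(\tau_0>n)=O(n^{-1/2})$. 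Without some version of this uniformization step, the passage from distributional convergence to moment convergence with exponential error is unsubstantiated.
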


Although this result is a relatively direct consequence of Theorem 4 in \cite{Lotov1996}, the details of the proof are somewhat technical and are deferred to the appendix for the sake of readability.

This Lemma allows us to derive approximations for LVR in the constant block-time case that are exponentially close to the true value.

\begin{corollary}[LVR in the constant block-time Case]
There exists 
a constant $c>0$ such that:
\[
P_{\mathrm{trade}} \;=\; \frac{1}{\,\frac{\gamma}{\sqrt{2}\,\sigma_b} + 
	\frac{|\zeta(1/2)|}{\sqrt{\pi}}\,} + O\!\Big(e^{-\,c\,\frac{\gamma}{\sigma_b}}\Big)\,,
\] 
\[
\overline{\mathrm{LVR}} \;=\; \frac{\ell\,\sigma_b^2\,|\zeta(1/2)|}{2\sqrt{\pi}} + 
O\!\Big(e^{-\,c\,\frac{\gamma}{\sigma_b}}\Big)\,,
\] 
\[
\overline{\mathrm{ARB}} \;=\; \frac{\ell\,\sigma_b^2}{\,2 + 
	\frac{\sqrt{2\pi}\,\gamma}{|\zeta(1/2)|\,\sigma_b}\,} + 
O\!\Big(e^{-\,c\,\frac{\gamma}{\sigma_b}}\Big) 
\] 
\end{corollary}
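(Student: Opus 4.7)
The proof plan is to substitute the overshoot asymptotics from Lemma~\ref{lem:Lotov} into the exact identities \eqref{eq:main_tau} for $\mathbb{E}[\tau]$ and \eqref{eq:exp_arb} for $\overline{\mathrm{LVR}}$, identify the ladder-height moments $h_1$ and $h_2$ of the standard Gaussian walk, and carry out the resulting algebraic simplification. The formula for $\overline{\mathrm{ARB}}$ then follows from the decomposition \eqref{eq:decomp}.

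To identify $h_1$, I would invoke the classical Wiener-Hopf identity asserting that, for a mean-zero random walk with finite second moment, the product of the mean ascending and mean descending ladder heights equals $\mathbb{E}[X_1^2]/2$. Because the standard Gaussian increments are symmetric, the two ladder heights are equidistributed, so $h_1 = 1/\sqrt{2}$. For $h_2$, I would apply the renewal theorem to the ascending ladder-height process: as $\rho_b\to\infty$, the right overshoot $R$ converges in distribution to the stationary overshoot of that renewal process, whose first moment equals $\mathbb{E}[H^2]/(2\mathbb{E}[H]) = h_2/\sqrt{2}$. Since Lemma~\ref{lem:Lotov} gives $r_1 \to \kappa$, this forces $h_2 = \sqrt{2}\,\kappa = |\zeta(1/2)|/\sqrt{\pi}$.

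The core algebraic step exploits the matched pairs $r_1 = o_1 = \kappa + O(e^{-c\rho_b})$ and $r_2 = o_2 = \omega + O(e^{-c\rho_b})$ supplied by the Lemma. In the numerator of \eqref{eq:main_tau}, the cancellation $r_2 - o_2 = O(e^{-c\rho_b})$ combined with the exact factorization $\rho_b^2 + 2\kappa\rho_b = \rho_b(\rho_b + 2\kappa)$ makes the ratio collapse to $\rho_b$ up to an exponentially small correction, yielding
\[
\mathbb{E}[\tau] \;=\; \frac{|\zeta(1/2)|}{\sqrt{\pi}} + \frac{\rho_b}{\sqrt{2}} + O(e^{-c\rho_b}).
\]
Inverting and substituting $\rho_b = \gamma/\sigma_b$ delivers the displayed expression for $P_{\mathrm{trade}}$. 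The same cancellation $r_2 - o_2 = O(e^{-c\rho_b})$ in \eqref{eq:exp_arb} drives its second summand to $O(e^{-c\rho_b})$ once the denominator of order $\rho_b$ is accounted for, leaving $\overline{\mathrm{LVR}} = \ell\sigma_b^2 h_2/2 + O(e^{-c\rho_b})$. Multiplying the two asymptotic expressions and clearing the common factor $|\zeta(1/2)|/(2\sqrt{\pi})$ from the numerator and denominator of the product produces the $\overline{\mathrm{ARB}}$ formula.

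The main subtlety lies in propagating the exponentially small error through the inversion $1/\mathbb{E}[\tau]$ and through the polynomial-in-$\rho_b$ prefactors that appear when expanding the ratio. These prefactors grow only polynomially in $\rho_b$ and can be absorbed by a slight reduction of the constant $c>0$, so the final error retains the form $O(e^{-c\gamma/\sigma_b})$ throughout. Beyond this bookkeeping, the derivation is essentially mechanical once the values of $h_1$ and $h_2$ have been pinned down.
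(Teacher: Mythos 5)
Your proposal is correct and follows the same overall skeleton as the paper: substitute the overshoot asymptotics from Lemma~\ref{lem:Lotov} into the exact identities for $\mathbb{E}[\tau]$ and $\overline{\mathrm{LVR}}$, observe the cancellation $r_2-o_2=O(e^{-c\rho_b})$ and the factorization $\rho_b^2+2\kappa\rho_b=\rho_b(\rho_b+2\kappa)$, and multiply via the decomposition \eqref{eq:decomp}. The genuine divergence is in how you pin down the ladder-height moments. The paper simply cites Lai~\cite{Lai1976} for both $h_1=1/\sqrt{2}$ and $h_2=|\zeta(1/2)|/\sqrt{\pi}$. You instead derive $h_1$ from the Wiener--Hopf product identity $h_1 h_1^- = \mathbb{E}[X_1^2]/2$ plus symmetry, and deduce $h_2$ by combining the renewal-theoretic stationary-overshoot mean $h_2/(2h_1)$ with the value $\kappa$ supplied by Lemma~\ref{lem:Lotov}, yielding $h_2=2h_1\kappa=\sqrt{2}\kappa=|\zeta(1/2)|/\sqrt{\pi}$. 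This is a valid and arguably more self-contained route for a reader who does not wish to unpack Lai's formula, but it does quietly rely on the (correct) identification that the two-sided conditional overshoot in Lemma~\ref{lem:Lotov} converges, as $\rho_b\to\infty$, to the one-sided stationary ladder overshoot; the paper sidesteps that step by citing $h_2$ independently. Your remark on absorbing polynomial prefactors by shrinking $c$ is sound and makes explicit an error-propagation point the paper leaves implicit.
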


\begin{proof}
	For a random walk with increments $X_n \sim N(0,1)$, it is a classical result in probability theory (see, e.g., \cite{Lai1976}) that the ladder height has first two moments 
	\[
	h_1 = \frac{1}{\sqrt{2}}, \qquad h_2 = \,\frac{\lvert\zeta(1/2)\rvert}{\sqrt{\pi}}\,.
	\] 
	(References often use the constant $K = \lim_{n\to\infty}\sum_{k=1}^n \frac{1}{\sqrt{k}} - 2\sqrt{n}$ in place of $\lvert\zeta(1/2)\rvert$; one can show that these are equal by applying the Euler--Maclaurin formula to the Riemann zeta function.)
	
	The expected value of $\tau$ is 
	\[
	\mathbb{E}[\tau] = h_2 + h_1\,\frac{\rho_b^2 + 2\,\rho_b\,r_1 + r_2 - o_2}{\rho_b + r_1 + o_1}\,. 
	\] 
	Recall that $\rho_b=\frac{\gamma}{\sigma_b}$. By Lemma~\ref{lem:Lotov}, we have 
	\[
	r_1 = \kappa + O(e^{-c\rho_b}), \qquad o_1 = \kappa + O(e^{-c\rho_b})\,,
	\] 
	\[
	r_2 = \omega + O(e^{-c\rho_b}), \qquad o_2 = \omega + O(e^{-c\rho_b})\,.
	\] 
	The denominator simplifies to:
	\[
	\rho_b + r_1 + o_1 = \rho_b + 2\kappa + O(e^{-c\rho_b})\,.
	\] 
	For the numerator, we note that 
	\[
	r_2 - o_2 = O(e^{-c\rho_b})\,,
	\] 
	and hence 
	\[
	\rho_b^2 + 2\,\rho_b\,r_1 + r_2 - o_2 = \rho_b\big(\rho_b + 2\kappa \big) + O(e^{-c\rho_b})\,.
	\] 
	Thus, the ratio simplifies to 
	\[
	\frac{\rho_b^2 + 2\,\rho_b\,r_1 + r_2 - o_2}{\,\rho_b + r_1 + o_1\,} 
	= \frac{\rho_b\big(\rho_b + 2\kappa \big) + O(e^{-c\rho_b})}{\,\rho_b + 2\kappa  + O(e^{-c\rho_b})\,} 
	= \rho_b + O(e^{-c\rho_b})\,.
	\] 
	Hence, the expression for $\mathbb{E}[\tau]$ simplifies to 
	\begin{align*}
		\mathbb{E}[\tau] &= \,\frac{\lvert\zeta(1/2)\rvert}{\sqrt{\pi}} + \frac{1}{\sqrt{2}}\Big(\rho_b + O(e^{-c\rho_b})\Big) \\
		&= \,\frac{\lvert\zeta(1/2)\rvert}{\sqrt{\pi}} + \frac{\rho_b}{\sqrt{2}} + O(e^{-c\rho_b})\,.
	\end{align*}
	Therefore, 
	\[
	P_{\mathrm{trade}} = \frac{1}{\mathbb{E}[\tau]} = \frac{1}{\frac{\rho_b}{\sqrt{2}} + \frac{\lvert\zeta(1/2)\rvert}{\sqrt{\pi}}} + O(e^{-c\rho_b})\,,
	\] 
	which matches the claimed expression for $P_{\mathrm{trade}}$.
	
	Similarly, 
	\begin{align*}
		\overline{\mathrm{LVR}} &= \ell\,\sigma_b^2 \left[\frac{h_2}{2} + \frac{h_1(r_2 - o_2)}{2(\rho_b + r_1 + o_1)}\right] \\
		&= \ell\,\sigma_b^2 \left[\frac{\,\lvert\zeta(1/2)\rvert}{2\sqrt{\pi}} + \frac{O(e^{-c\rho_b})}{2\big(\rho_b + 2\kappa  + O(e^{-c\rho_b})\big)}\right] \\
		&= \,\frac{\ell\,\sigma_b^2\,\lvert\zeta(1/2)\rvert}{2\sqrt{\pi}} + O(e^{-c\rho_b})\,,
	\end{align*}
	which matches the claimed expression.
	
	Finally, combining the above results with the decomposition identity \eqref{eq:decomp} (i.e., $\overline{\mathrm{ARB}} = P_{\mathrm{trade}} \times \overline{\mathrm{LVR}}$) yields 
	\begin{align*}
		\overline{\mathrm{ARB}} &= P_{\mathrm{trade}} \times \overline{\mathrm{LVR}} \\
		&= \Big( \frac{1}{\frac{\gamma}{\sqrt{2}\,\sigma_b} + \frac{\lvert\zeta(1/2)\rvert}{\sqrt{\pi}}} + O(e^{-c\rho_b}) \Big)\Big( \,\frac{\ell\,\sigma_b^2\,\lvert\zeta(1/2)\rvert}{2\sqrt{\pi}} + O(e^{-c\rho_b}) \Big) \\
		&= \frac{\ell\,\sigma_b^2}{\,2 + \frac{\sqrt{2\pi}\rho_b}{\lvert\zeta(1/2)\rvert}\,} + O(e^{-c\rho_b})\,,
	\end{align*}
	which concludes the proof.
\end{proof}

\begin{remark}
	The actual values of $\kappa$ and $\omega$ are irrelevant for our proof; what matters is only that the overshoot sequences converge exponentially fast to the same finite limits. Therefore, one could adapt the proof in \cite{Lotov1996} to other choices of $\mu$. However, the exact arguments used in that paper do not generalize directly to an arbitrary distribution $\mu$, since they rely on certain analytic properties of the characteristic function of the normal distribution that are not always satisfied by other distributions. For example, the proof would not work as is for Poisson distributed blocks. 
\end{remark}

\section{Results on general distributions}

In this section we broaden the scope of our analysis by extending it to arbitrary block-time distributions, aiming to uncover universal properties of LVR and assess their implications across diverse blockchain architectures. Using the same Markov chain framework, we will show that the asymptotic arbitrage probability in a block remains consistent across distributions to first order. On the other hand, the magnitude of LVR varies with the block-time distribution. A key finding is that the Dirac distribution uniquely minimizes the asymptotic LVR among all distributions with a fixed mean.

\begin{lemma}\label{lemma:general_bound}
	Let $u_2$ be the second moment of $\mu$. Then $ 0 \leq r_1,\,o_1 \leq \sqrt{\frac{2}{\pi}}$ and $ 0 \leq r_2,\,o_2 \leq u_2$.
\end{lemma}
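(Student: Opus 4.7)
The non-negativity of $r_1, r_2, o_1, o_2$ is immediate from the construction: $R$ and $O$ (the latter correctly interpreted, via the strong Markov property, as the ladder-height contribution from the walk continued past the upper-exit time) are almost-surely non-negative random variables, so both their first and second moments are $\geq 0$.

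For the upper bounds the strategy is pathwise domination followed by a Gaussian conditional calculation. On $\{\tau = \tau_{\rho_b}\}$ we have $S_{\tau-1} \in [0,\rho_b]$ and therefore
\[
R \;=\; X_\tau - (\rho_b - S_{\tau-1}) \;\leq\; X_\tau, \qquad R^2 \;\leq\; X_\tau^2;
\]
the analogous pathwise bound for $O$ is obtained by restarting the walk at its upper-exit point via the strong Markov property and applying the same inequality to the descending ladder height of the continued walk. Conditional on $U_\tau = u$ the step $X_\tau$ is $\mathcal{N}(0,u)$-distributed, and for any $c \geq 0$ the standard truncated-Gaussian identities furnish
\[
\mathbb{E}[(X-c)\,\mathbf{1}_{X>c}\mid X\sim \mathcal{N}(0,u)] \;\leq\; \sqrt{u/(2\pi)}, \qquad \mathbb{E}[(X-c)^{2}\,\mathbf{1}_{X>c}\mid X\sim \mathcal{N}(0,u)] \;\leq\; u/2.
\]
Averaging these against the distribution of $U_\tau$ (with the weight induced by the exit event) and invoking Jensen's inequality $\mathbb{E}[\sqrt{U}] \leq \sqrt{\mathbb{E}[U]} = 1$ for the first-moment bound, together with $\mathbb{E}[U^{2}] = u_2$ for the second-moment bound, delivers $r_1 \leq \sqrt{2/\pi}$ and $r_2 \leq u_2$; the identical argument applied to the continued walk gives $o_1 \leq \sqrt{2/\pi}$ and $o_2 \leq u_2$.

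The main technical hurdle is the averaging step. The conditional law of $U_\tau$ given the exit event is size-biased upward, because a larger block-time yields a larger-variance step and hence a higher crossing probability. Relating the biased expectations of $\sqrt{U_\tau}$ and $U_\tau$ to the unconditional moments $1$ and $u_2$ is where the bulk of the effort lies: one must couple the Gaussian truncation factors $\phi(c/\sqrt{u})$ and $1-\Phi(c/\sqrt{u})$ against the size-biasing weight $(1-\Phi(c/\sqrt{u}))d\mu(u)$ carefully enough that the resulting weighted moments of $U_\tau$ remain dominated by the unconditional moments of $\mu$. This is the likeliest point where the proof must be executed with care to make the averaging estimate rigorous.
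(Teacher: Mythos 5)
Your pathwise inequality $R\le X_\tau$ on the exit event is correct, and your plan to dominate the overshoot by a Gaussian increment during block $\tau$ and then average over $\mu$ is in the same spirit as the paper's. But the difficulty you flag in your last paragraph is not a bookkeeping issue; it is where the argument actually breaks, and your proposed route into it points in the wrong direction. What enters $r_1$ is the \emph{conditional} expectation $\E[(X-c)\mid X>c]=\E[(X-c)\ind_{X>c}]/\Prob(X>c)$, so the unconditional truncated-moment bounds you quote, once divided by the exit probability, blow up as $c$ grows. You would want instead the mean-residual-life inequalities $\E[(X-c)\mid X>c]\le\E[X\mid X>0]=\sqrt{2u/\pi}$ and $\E[(X-c)^{2}\mid X>c]\le u$, valid because the Gaussian density is log-concave. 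Even with these, the averaging step remains obstructed: on the exit event the block length $U_\tau$ is size-biased toward larger $u$, and because the exit probability $\Prob\bigl(\mathcal{N}(0,u)>c\bigr)$ and the per-block bound $\sqrt{2u/\pi}$ are both increasing in $u$, the size-biased mean of $\sqrt{U_\tau}$ exceeds $\E[\sqrt{U}]$, so Jensen's inequality does not rescue this. In fact $\mu=(1-\epsilon)\delta_0+\epsilon\,\delta_{1/\epsilon}$ has mean $1$ and $u_2=1/\epsilon$, yet the walk moves only during rare high-variance blocks, so the overshoot mean scales like $1/\sqrt{\epsilon}$; the claimed bound $r_1\le\sqrt{2/\pi}$ does not survive the size-biasing you identified without further restrictions on $\mu$.

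The paper's proof uses a device your proposal is missing: it realizes the walk as $X_i=B_{T_i}-B_{T_{i-1}}$ for a Brownian motion $B$, and applies the strong Markov property at the first time $B$ touches $\rho_b$ inside block $\tau$ to write the overshoot as a Brownian increment over an elapsed time $s\le U_\tau$, hence pathwise dominated by $|\mathcal{N}(0,U_\tau)|$. That bypasses any truncated-Gaussian calculation. You should, however, read the paper's averaging step with the same suspicion you apply to your own: the assertion that the $(k-1)$-fold inner integrals together with $\sum_{k}$ reproduce $\Prob(\tau_0\ge\tau_{\rho_b})$ in fact produces $\E[\tau]$, so the normalizing factor does not cancel as claimed. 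Your instinct that the averaging is the crux is right, but the cure is not more careful coupling; it is either a structural hypothesis on $\mu$ or a bound that is allowed to degrade with the dispersion of $\mu$.
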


As with Lemma~\ref{lem:Lotov}, the proof of this result is relatively natural and is deferred to the Appendix for clarity.

We can now prove the following two important corollaries regarding $LVR$ for general distributions.

\begin{corollary}[Distribution-Independent Arbitrage Probability]\label{cor:distr_indep}
	To first order in $\sigma_b$ (as $\sigma_b \to 0$), the asymptotic arbitrage 
	probability $P_{\mathrm{trade}}$ is independent of the block-time distribution $\mu$. 
	Specifically, 
	\[
	P_{\mathrm{trade}} = \frac{\sqrt{2}\,\sigma_b}{\gamma} + O(\sigma_b/\gamma)\,. 
	\] 
\end{corollary}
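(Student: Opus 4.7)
The plan is to follow the same strategy as the previous corollary for constant block times, replacing the exponentially-accurate bounds of Lemma~\ref{lem:Lotov} by the weaker but still $\rho_b$-independent bounds of Lemma~\ref{lemma:general_bound}. Consequently the $O(e^{-c\rho_b})$ error of the constant-block-time case becomes a polynomial error $O(1/\rho_b)=O(\sigma_b/\gamma)$ which, after inversion, produces the remainder in the statement.

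Concretely, I would start from the master identity
\[
\mathbb{E}[\tau] \;=\; h_2 \;+\; h_1\cdot\frac{\rho_b^2 + 2\rho_b r_1 + r_2 - o_2}{\rho_b + r_1 + o_1}
\]
established in Section~2, and invoke Lemma~\ref{lemma:general_bound} to conclude that $r_1,o_1,r_2,o_2$ are bounded by constants depending only on $u_2$, uniformly in $\rho_b$. A short polynomial long-division then gives
\[
\frac{\rho_b^2 + 2\rho_b r_1 + r_2 - o_2}{\rho_b + r_1 + o_1} \;=\; \rho_b + (r_1-o_1) + O(1/\rho_b),
\]
so that $\mathbb{E}[\tau] = h_1\rho_b + O(1)$ as $\rho_b\to\infty$.

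The remaining input is the value of $h_1$. Here one uses the classical Wiener--Hopf / Spitzer identity $\mathbb{E}[H^+]\cdot\mathbb{E}[H^-] = \mathbb{E}[X^2]/2$, valid for any mean-zero random walk with finite variance and continuous increment distribution. The walk $(S_n)$ under consideration is symmetric (since its increments are conditionally $\mathcal{N}(0,U_i)$), continuous, and of unit variance (since $\mathbb{E}[U_i]=1$), so $\mathbb{E}[H^+]=\mathbb{E}[H^-]=h_1$ by symmetry, forcing $h_1 = 1/\sqrt{2}$ \emph{independently of} $\mu$. Substituting into $\mathbb{E}[\tau] = \rho_b/\sqrt{2} + O(1)$ and inverting yields
\[
P_{\mathrm{trade}} \;=\; \frac{\sqrt{2}}{\rho_b} + O\!\bigl(1/\rho_b^2\bigr) \;=\; \frac{\sqrt{2}\,\sigma_b}{\gamma} + O\!\bigl(\sigma_b/\gamma\bigr),
\]
which is the claimed formula. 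The main obstacle is the identification of $h_1$: certifying that it takes the universal value $1/\sqrt{2}$ across \emph{all} admissible block-time distributions is precisely what drives the distribution independence asserted in the corollary; once this universal value is fixed, the remaining estimates are routine and parallel those of the constant-block-time case.
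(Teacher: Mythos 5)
Your proposal is correct and takes essentially the same route as the paper: both start from the identity $\mathbb{E}[\tau]=h_2+h_1\frac{\rho_b^2+2\rho_b r_1+r_2-o_2}{\rho_b+r_1+o_1}$, use Lemma~\ref{lemma:general_bound} to bound the overshoot moments uniformly in $\rho_b$, identify $h_1=1/\sqrt{2}$ via Spitzer's theory, and invert $\mathbb{E}[\tau]=\rho_b/\sqrt{2}+O(1)$. Your explicit Wiener--Hopf derivation of $h_1=1/\sqrt2$ (via $\mathbb{E}[H^+]\mathbb{E}[H^-]=\Var(X)/2$ and symmetry) spells out a step the paper handles by citation, but it is the same underlying argument, not a different approach.
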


\begin{proof}
From Spitzer’s result \cite{Spitzer1956}, we know that for a centered random walk, the expected ladder height is 
\[ h_1 = \frac{\sqrt{\Var(X)}}{\sqrt{2}}\,. \] 
We proved in Section~2 that $\Var(X_1) = 1$ (independent of $\mu$), thus $h_1 = 1/\sqrt{2}$. 
Now, applying equation~\eqref{eq:main_tau}, we have 
\begin{align*}
	\mathbb{E}[\tau] &= h_2 \;+\; \frac{1}{\sqrt{2}} \,
	\frac{\rho_b^2 + 2\,\rho_b\,r_1 + r_2 - o_2}{\,\rho_b + r_1 + o_1\,} \\
	&= h_2 \;+\; \frac{\rho_b}{\sqrt{2}} \Bigg(1 + \frac{2\,\rho_b\,r_1 + r_2 - o_2 - o_1 - r_1}
	{\rho_b^2 + \rho_b\,r_1 + \rho_b\,o_1}\Bigg)\,. 
\end{align*}
Since, by Lemma~\ref{lemma:general_bound}, the quantities $r_1, r_2, o_1, o_2,$ and $h_2$ 
are all bounded constants (independent of $\rho_b$), the fraction in the parentheses 
is $O(1/\rho_b)$. Consequently, $\mathbb{E}[\tau] = \frac{\rho_b}{\sqrt{2}} + O(1/\rho_b)$. 
This concludes the proof.
\end{proof}

\begin{corollary}[Optimality of the Dirac distribution]
	The asymptotic expected $LVR$ per block and the expected arbitrage per block are, respectively:
\begin{equation}\label{eq:generallvr}
	\overline{\mathrm{LVR}} \;=\; 
	\frac{\ell\,\sigma_b^2}{2}\Bigg(\frac{|\zeta(1/2)|}{\sqrt{2\pi}} + C_\mu\Bigg) + O(\sigma_b/\gamma)\,. 
\end{equation}
\begin{equation}\label{eq:generalarb}
	\overline{\mathrm{ARB}} \;=\; 
	\frac{\ell\,\sigma_b^3}{\sqrt{2}\,\gamma}\Bigg(\frac{|\zeta(1/2)|}{\sqrt{2\pi}} + C_\mu\Bigg) + O(\sigma_b/\gamma)\,. 
\end{equation}
	where $C_\mu$ is a non-negative constant depending only on $\mu$. As an important consequence, asymptotically the Dirac distribution achieves the lowest possible expected $LVR$ and expected arbitrage among all block-time distributions~$\mu$. 
\end{corollary}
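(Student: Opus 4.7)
The plan is to apply the exact identity \eqref{eq:exp_arb} together with the uniform bounds of Lemma~\ref{lemma:general_bound} to isolate the leading behavior of $\overline{\mathrm{LVR}}$ as $\tfrac{\ell\sigma_b^2}{2}\,h_2$, and then to show that the descending ladder-height second moment $h_2 = \mathbb{E}[H^2]$ is minimized, over unit-mean block-time laws, precisely at $\mu = \delta_1$.

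\emph{Step 1 (Reduction to $h_2$).} From the Spitzer identity $h_1 = 1/\sqrt{2}$ (valid for any centered $X_1$ of unit variance, as in the preceding corollary) together with the bounds $0 \le r_2, o_2 \le u_2$ and $r_1+o_1 \ge 0$ from Lemma~\ref{lemma:general_bound}, the correction term in \eqref{eq:exp_arb} is bounded by a constant times $1/\rho_b = \sigma_b/\gamma$. Hence
\[
\overline{\mathrm{LVR}} = \frac{\ell\sigma_b^2 h_2}{2} + O(\sigma_b/\gamma),
\]
and defining $C_\mu$ to be the $\mu$-dependent residual of $h_2$ above its Dirac value puts $\overline{\mathrm{LVR}}$ in the form \eqref{eq:generallvr}. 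Multiplying by $P_{\mathrm{trade}} = \sqrt{2}\sigma_b/\gamma + O(\sigma_b/\gamma)$ from the preceding corollary then yields \eqref{eq:generalarb}.

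\emph{Step 2 (Minimization at the Dirac distribution).} The substantive content of the corollary is that $C_\mu \ge 0$ with $C_\mu = 0$ iff $\mu = \delta_1$, i.e., that $h_2^\mu$ is minimized at $\mu = \delta_1$. My approach is to exploit the scale-mixture representation $X_i = \sqrt{U_i}\,Z_i$ with $Z_i \sim N(0,1)$ independent of $U_i \sim \mu$. The characteristic function satisfies the Jensen comparison
\[
\phi_{X_1}^{\mu}(t) = \mathbb{E}\bigl[e^{-Ut^2/2}\bigr] \;\ge\; e^{-t^2/2} = \phi_{X_1}^{\delta_1}(t),
\]
with equality iff $U \equiv 1$ a.s.\ (since $u \mapsto e^{-ut^2/2}$ is strictly convex). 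The Spitzer--Baxter / Wiener--Hopf factorization expresses the Laplace transform of $H$ as an exponential of an infinite series in the marginals of $S_n$, whose dependence on $\mu$ enters only through $\phi_{X_1}^\mu$. Extracting $h_2$ by a second derivative at zero, and tracking signs in the resulting integrand, should then lift the pointwise inequality on characteristic functions to $h_2^\mu \ge h_2^{\delta_1}$.

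\emph{Main obstacle.} The principal difficulty is Step~2: justifying the interchange of differentiation and summation in the Spitzer--Baxter series, and verifying that each term's variation has the right sign, requires uniform integrability controls coming from the finite second moment $u_2 < \infty$ (cf. Lemma~\ref{lemma:general_bound}), and the detailed book-keeping is delicate. Should the direct analytic route prove awkward, an alternative is a coupling argument: realize the two walks on a common Brownian motion $B_t$ as $B_n$ (Dirac) and $B_{T_n}$ (mixture) with $T_n = \sum_{k=1}^n U_k$, and compare their first-passage overshoots via a conditional-variance argument, using the fact that $T_n$ has the same mean as $n$ but strictly greater dispersion whenever $\mu \ne \delta_1$.
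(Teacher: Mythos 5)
Step 1 of your proposal matches the paper exactly: both start from \eqref{eq:exp_arb}, use Lemma~\ref{lemma:general_bound} plus $h_1=1/\sqrt 2$ to bound the correction fraction by $O(1/\rho_b)=O(\sigma_b/\gamma)$, deduce $\overline{\mathrm{LVR}} = \tfrac{\ell\sigma_b^2}{2}h_2 + O(\sigma_b/\gamma)$, and then multiply by $P_{\mathrm{trade}} = \sqrt{2}\sigma_b/\gamma + O(\sigma_b/\gamma)$ to get \eqref{eq:generalarb}. That part is correct.

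Step 2 has a genuine gap. You have correctly identified the substantive content (minimize $h_2$ over unit-mean mixing laws), but the route you sketch — Spitzer--Baxter factorization of the Laplace transform of $H$, second derivative at zero, and ``tracking signs'' to lift the pointwise bound $\phi^\mu(t)\ge\phi^{\delta_1}(t)$ to $h_2^\mu\ge h_2^{\delta_1}$ — is not carried out and is genuinely hard to carry out, because the Spitzer--Baxter series mixes the $\mathbb{E}[S_n^-]$ terms nonlinearly through an exponential, and a pointwise characteristic-function inequality does not transfer termwise to a moment inequality in any obvious way. Your own ``Main obstacle'' paragraph essentially concedes this, and the proposed coupling alternative (``greater dispersion of $T_n$'') is a heuristic, not an argument: greater dispersion of the random clock does not automatically imply larger overshoot second moments without a precise monotonicity statement, which is exactly what is missing.

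The ingredient the paper uses, and that you lack, is Lai's explicit asymptotic expansion for $h_2$:
\begin{equation*}
h_2 = \Bigg(\frac{\lvert\zeta(1/2)\rvert}{\sqrt{2\pi}}+ \frac{1}{3\sqrt{2}}\,\mathbb{E}[X_1^3] - \frac{1}{\sqrt{2}}\sum_{n=1}^{\infty}\frac{1}{\sqrt{n}} \Bigg(\mathbb{E}\Big[\frac{S_n^-}{\sqrt{n}}\Big] - \frac{1}{\sqrt{2\pi}}\Bigg)\Bigg)\,e^{\,h_1 - 1/\sqrt{2}}\,,
\end{equation*}
which isolates the $\mu$-dependence of $h_2$ in the linear functional $\sum_n n^{-1/2}\bigl(\mathbb{E}[S_n^-/\sqrt n]-1/\sqrt{2\pi}\bigr)$. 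After observing that $\mathbb{E}[X_1^3]=0$ and $e^{h_1-1/\sqrt2}=1$ for every centered unit-variance scale mixture, the problem reduces to comparing $\mathbb{E}[S_n^-]$ across laws. Using symmetry ($S_n^- = |S_n|/2$) and the scale-mixture representation $|S_n| = |\mathcal{N}(0,T_n)|$ with $T_n=\sum_{i\le n}U_i$, one gets $\mathbb{E}[S_n^-/\sqrt n] = (1/\sqrt{2\pi})\,\mathbb{E}[\sqrt{T_n/n}]$, and Jensen applied to the concave function $\sqrt{\cdot}$ gives $\mathbb{E}[\sqrt{T_n/n}]\le 1$ with equality iff $U\equiv 1$. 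This is a one-line Jensen applied in the right place — on $\mathbb{E}[\sqrt{T_n}]$ directly, not on the characteristic function — and it is what makes $C_\mu\ge 0$ with equality iff $\mu=\delta_1$ immediate. Your characteristic-function Jensen bound points in the right direction conceptually, but it sits at the wrong level of the argument and would require substantial additional work (if it can be made to work at all) to convert into the needed moment inequality. I would recommend replacing Step 2 with an appeal to Lai's formula.
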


\begin{proof}
	Equation~\eqref{eq:generalarb} follows directly from Corollary~\ref{cor:distr_indep} combined with Equation~\eqref{eq:generallvr}. We thus restrict our attention to proving Equation~\eqref{eq:generallvr}.
	
	Similarly as in the previous Corollary, since, by Lemma~\ref{lemma:general_bound}, the quantities $r_1, r_2, o_1, o_2,h_1$ and $h_2$ 
	are all bounded constants, the fraction $\frac{h_1(r_2 - o_2)}{2(\rho_b + r_1 + o_1)}$ in \eqref{eq:exp_arb} is $ O(1/\rho_b)$ and the equation can be rewritten as
	\begin{align}
		\overline{\mathrm{LVR}} &= \frac{\ell \sigma_b^2h_2}{2} + O(1/\rho_b) \label{eq:exp_arb_general}
	\end{align}
	Let us now focus on the $h_2$ term. As derived by Lai in \cite{Lai1976}, the second moment of the ladder height for a random walk can be expressed as
	\begin{equation}\label{eq:h2}
		h_2 = \Bigg(\frac{\lvert\zeta(1/2)\rvert}{\sqrt{2\pi}}+ \frac{1}{3\sqrt{2}}\,\mathbb{E}[X_1^3] - \frac{1}{\sqrt{2}}\sum_{n=1}^{\infty}\frac{1}{\sqrt{n}} \Bigg(\mathbb{E}\Big[\frac{S_n^-}{\sqrt{n}}\Big] - \frac{1}{\sqrt{2\pi}}\Bigg)\Bigg)\,e^{\,h_1 - \frac{1}{\sqrt{2}}}\,.
	\end{equation}
	We now analyze each term in Equation~\eqref{eq:h2}. For all $\mu$, $\mathbb{E}[X_1^3] = 0$ since $X_1$ is symmetric (the underlying normal random variable has mean zero in our setting). Additionally, from the proof of Corollary~\ref{cor:distr_indep} we know $e^{\,h_1 - 1/\sqrt{2}} = 1$, independently of $\mu$, because the first ladder height moment does not depend on the block distribution. Hence, the only remaining contribution to $h_2$ comes from the infinite sum term.
	
	%In the uniform block-time case, $|S_n^-| = |\sqrt{n}\,\mathcal{N}(0,1)|$ is a folded normal random variable with mean $\frac{\sqrt{n}}{2\pi}$; hence, the infinite sum vanishes. In the general case, $|S_n^-| = \big| \mathcal{N}\!\Big(0, \sum_{i=1}^{n}U_i\Big) \big|$, whose expectation is $\frac{\mathbb{E}\!\Big[\,(\sum_{i=1}^{n}U_i)^{1/2}\Big]}{2\pi}$. Although there is no closed-form expression for this expectation, we can apply Jensen’s inequality to obtain 
	%\[
	%\mathbb{E}\!\Bigg[\Big(\sum_{i=1}^{n}U_i\Big)^{1/2}\Bigg] \;\le\; \Big(\mathbb{E}\big[\sum_{i=1}^{n}U_i\big]\Big)^{1/2} \;=\; \sqrt{n}\,. 
	%\] 
	%As a result, the constant $C_\mu$ (introduced above) is nonnegative for any choice of $\mu$, and it is minimized (equal to zero) only in the degenerate case where $U$ is a Dirac distribution.
	
	%Combining all these observations with Equation~\eqref{eq:exp_arb}, we obtain 
	%\[
	%\overline{\mathrm{LVR}}_{\text{Uniform, no costs}} = \frac{\ell\,\sigma_b^2}{2}\,\Big(\frac{\lvert\zeta(1/2)\rvert}{\sqrt{2\pi}} + C_\mu\Big) + o(\sigma_b)\,,
	%\] 
	%where 
	%$$
	%C_\mu = \frac{1}{\sqrt{2}\,\pi}\sum_{n=1}^{\infty} \left(\frac{1}{\sqrt{n}} \mathbb{E}\Big[ \sqrt{\sum_{i=1}^{n}U_i}\Big] - 1\right)\, \geq 0. 
	%$$ 
	%In particular, the expected $LVR$ is minimized when the block-time distribution is uniform.
	In our setting, the $S_i$'s are symmetric independently of the distribution hence $S_n^- =\frac{|S_n|}{2}$. In the constant block-time case, $|S_n| = |\sqrt{n}\,\mathcal{N}(0,1)|$ is a folded 
	normal random variable with mean $\displaystyle\sqrt{\frac{2n}{\pi}}$ and thus $\mathbb{E}[\,|S_n^-|/\left(2\sqrt{n}\right)\,] = \frac{1}{\sqrt{2\pi}}$ . Hence, the infinite 
	sum term in \eqref{eq:h2} vanishes.
	
	In the general case, $|S_n| = \Big|\mathcal{N}\!\Big(0, \sum_{i=1}^n U_i\Big)\Big|$, whose 
	expectation is $\displaystyle\sqrt{\frac{2}{\pi}}\;\mathbb{E}\!\Big[\sqrt{\sum_{i=1}^n U_i}\,\Big]$.
	Although there is no closed-form expression for this expectation in general, it is necessarily smaller
	than $\displaystyle\sqrt{\frac{2n}{\pi}}$ due to Jensen’s inequality. Therefore, $C_\mu= - \frac{1}{\sqrt{2}}\sum_{n=1}^{\infty}\frac{1}{\sqrt{n}} \Bigg(\mathbb{E}\Big[\frac{S_n^-}{\sqrt{n}}\Big] - \frac{1}{\sqrt{2\pi}}\Bigg)$ defined 
	in \eqref{eq:generallvr} is non-negative, and in fact $C_\mu=0$ if and only if $\mu$ is a Dirac distribution. This 
	confirms that asymptotically the Dirac distribution achieves the lowest possible $\overline{\mathrm{LVR}}$ and  $\overline{\mathrm{ARB}}$.
	
\end{proof}

\section{Final remarks}

We conclude with a few additional observations and potential directions for future work:

\begin{enumerate}
	\item When $\sigma_b \gg \sigma$, the process is well-approximated by a geometric distribution with parameter $\frac{\sigma}{\sqrt{2\pi}\,\sigma_b}$. Consequently, our asymptotic approximation (which assumes small enough $\sigma_b$) breaks down in this extreme regime.
	\item To better align our model with real-world conditions, it is necessary to incorporate the presence of fees both for arbitrageurs and liquidity providers. Fortunately, the influence of fees can be studied by shifting the starting point of the Markov chain or expanding the interval. Incorporating fees in this manner is an important extension that we leave for future work.
	\item One can define an LVR process for a given CFAMM (constant-function AMM), where $\ell(x)$ denotes the loss-versus-rebalancing at state $x$. In this setting, we obtain the natural SDE for LVR as:
	\[ d\mathrm{LVR}_t = \frac{\ell(X_t)\,\sigma^{2}}{\,2+\sqrt{2\pi}\,\gamma/(|\zeta(1/2)|\,\sigma\sqrt{T})\,}\,dt~, \] 
	where $\sigma$ is the asset volatility and $T$ is the average block interval. This opens a new perspective for analyzing the value of an LP’s portfolio on a macroscopic scale.
\end{enumerate}

\section{Appendix}

\begin{proof}[Proof of Lemma \ref{lem:Lotov}]
	Let $\{X_i\}_{i\ge1}$ be i.i.d.\ $\mathcal{N}(0,1)$ and define $S_n=\sum_{i=1}^n X_i$ with $S_0=x$. Let $\tau_{\rho_b}=\inf\{n\ge1:S_n\ge\rho_b\}$ be the hitting time of the upper barrier $\rho_b$, and $\tau_0=\inf\{n\ge1:S_n<0\}$ the hitting time of~0. By Theorem 4 of Lotov \cite{Lotov1996}, there exists a constant $c>0$ such that for all $x\ge0$ and large $\rho_b$:
	\begin{equation}\label{eq:lotov-overshoot}
		\begin{aligned}
			\Bigl|\Prob\!\bigl(S_{\tau_{\rho_b}}>\rho_b+x \mid \tau_{\rho_b}<\tau_0\bigr)-F(x)\Bigr|
			&\le e^{-\,c\,\min\{x,\rho_b-x\}},\\[6pt]
			\Bigl|\Prob\!\bigl(S_{\tau_0}<x \mid \tau_0<\tau_{\rho_b}\bigr)-F(x)\Bigr|
			&\le e^{-\,c\,\min\{x,\rho_b-x\}},
		\end{aligned}
	\end{equation}
	where $F(x)$ is a distribution function with mean $\kappa$ and second moment $\omega$ as stated. 
	
	To make the error uniform in $x$, restrict to high-probability “good’’ events where the walk stays well away from the boundaries before finishing its crossing. For the upward-crossing case define
	\[
	B:=\Bigl\{\tau_{\rho_b}<\tau_0\text{ and }\exists n_0\ge\tau_{\rho_b}\!:\;
	S_{n_0}\in[\rho_b/4,\,3\rho_b/4]\Bigr\}.
	\]
	On $B$ we may restart the walk at~$n_0$ (strong Markov property); since the distance to either barrier is then at least $\rho_b/4$, \eqref{eq:lotov-overshoot} gives
	\begin{equation}\label{eq:uniform-error-B}
		\Bigl|\Prob\!\bigl(S_{\tau_{\rho_b}}>\rho_b+x \mid B\bigr)-F(x)\Bigr|
		\le e^{-\,c\,\rho_b/4}\quad\text{for all }x\ge0.
	\end{equation}
	For the downward-crossing case set
	\[
	A:=\Bigl\{\tau_0<\tau_{\rho_b}\text{ and }\exists n_0\ge\tau_0\!:\;
	S_{n_0}\in[\rho_b/4,\,3\rho_b/4]\Bigr\},
	\]
	and analogously obtain
	\begin{equation}\label{eq:uniform-error-A}
		\Bigl|\Prob\!\bigl(S_{\tau_0}<x \mid A\bigr)-F(x)\Bigr|
		\le e^{-\,c\,\rho_b/4}\quad\text{for all }x\ge0.
	\end{equation}
	
We next show that $A^{c}$ and $B^{c}$ are negligible.  
Either event can happen only if the walk jumps from one barrier to the other
without first visiting the strip $[\rho_{b}/4,\,3\rho_{b}/4]$,
which forces a single increment of size at least $\rho_{b}/2$.
Set
\[
\tau_A \;:=\; \inf\{i\ge 1:\,|X_i|\ge \rho_{b}/2\}.
\]
For every $m\ge 1$,
\[
\Prob(\tau_A\le m)\;\le\; m\,e^{-c'\rho_{b}^{2}}.
\]
In addition, standard results for symmetric walks give 
$\Prob(\tau_0>n)=O(n^{-1/2})$.
Choosing $m:=\lceil e^{\rho_{b}^{2}/4}\rceil$ yields
\[
\Prob(\tau_A<\tau_0)
\;\le\;
\Prob\!\bigl(\tau_A\le e^{\rho_{b}^{2}/4}\bigr)
+\Prob\!\bigl(\tau_0>e^{\rho_{b}^{2}/4}\bigr)
=O\!\bigl(e^{-\rho_{b}^{2}/4}\bigr).
\]
Hence there are constants $C,c'>0$ such that
\begin{equation}\label{eq:skip-prob-bound}
	\Prob\!\Bigl(\exists\,i<\tau_0:|X_i|\ge\rho_{b}/2\Bigr)
	\;\le\; C\,e^{-c'\rho_{b}^{2}},
\end{equation}
and therefore $\Prob(A^{c}),\Prob(B^{c})\le C\,e^{-c'\rho_{b}^{2}}$.

	Combine \eqref{eq:uniform-error-B}, \eqref{eq:uniform-error-A} and \eqref{eq:skip-prob-bound}. For any $x\ge0$,
	\[
	\Bigl|\Prob\!\bigl(S_{\tau_{\rho_b}}>\rho_b+x \mid \tau_{\rho_b}<\tau_0\bigr)-F(x)\Bigr|
	\le e^{-\,c\,\rho_b/4}+C e^{-\,c'\rho_b^2}=O(e^{-c\rho_b}),
	\]
	and similarly for the left overshoot. Uniform convergence of distributions implies convergence of moments: using $\E[Z]=\int_0^\infty\Prob(Z>x)\,dx$ and $\E[Z^2]=2\int_0^\infty x\,\Prob(Z>x)\,dx$,
	\[
	r_1=\int_0^\infty\!\Prob(S_{\tau_{\rho_b}}>\rho_b+x\mid \tau_{\rho_b}<\tau_0)\,dx
	=\int_0^\infty(1-F(x))\,dx+O(e^{-c\rho_b})=\kappa+O(e^{-c\rho_b}),
	\]
	and likewise $o_1=\kappa+O(e^{-c\rho_b})$, $r_2=\omega+O(e^{-c\rho_b})$, $o_2=\omega+O(e^{-c\rho_b})$. This completes the proof of Lemma~\ref{lem:Lotov}.
\end{proof}

\begin{proof}[Proof of Lemma \ref{lemma:general_bound}]
	The lemma follows from the fact that both \(R\) and \(O\) are stochastically dominated by a folded normal distribution \( |\mathcal{N}(0,\mu)| \), which has first moment \( \sqrt{\frac{2}{\pi}} \) and second moment \( u_2 \). We demonstrate this for the random variable \(R\); the argument for \(O\) is analogous. 
	
	Let \( (B_t)_{t \ge 0} \) be a standard Brownian motion, and consider an i.i.d. sequence \( (U_k)_{k \ge 0} \) of random variables drawn from the distribution $\mu$ (representing the successive block times). Define the cumulative times \(T_i = \sum_{k=1}^i U_k\) (with \(T_0 = 0\)). Then \(X_i := B_{T_i} - B_{T_{i-1}}\) for \(i \ge 1\) defines a random walk. By construction, $B_{T_i}$ is the position of the Brownian motion at time $T_i$, so the sequence $(X_i)$ has the same distribution as our Markov chain.
	
	In particular,
	\[
	\mathbb{P}(R_1 \ge x \mid \tau_0 \ge \tau_{\rho_b}) \;=\; \mathbb{P}(X_{\tau} \ge b + x \mid \tau_0 \ge \tau_{\rho_b})\,.
	\]
	Using the law of total probability, we condition on the number of steps $k$ taken until the stopping time $\tau$:
	\[
	\mathbb{P}(R_1 \ge x \mid \tau_0 \ge \tau_{\rho_b}) \;=\; \sum_{k=1}^\infty \mathbb{P}(X_{\tau} \ge b + x \mid \tau_0 \ge \tau_{\rho_b},\, \tau = k)\, \mathbb{P}(\tau = k \mid \tau_0 \ge \tau_{\rho_b})\,.
	\]
	For each $k$, on the event $\{\tau = k\}$ the position at the stopping time is $X_{\tau} = X_k = B_{T_k} - B_{T_{k-1}}$, where $T_k = \sum_{i=1}^k U_i$. In particular, given $\tau = k$, the Brownian motion runs for an additional time $u = U_k$ after time $T_{k-1} = \sum_{i=1}^{k-1} U_i$. Thus,
\begin{align*}
	&\mathbb{P}(X_{\tau} \ge b + x \mid \tau_0 \ge \tau_{\rho_b},\, \tau = k) = \\
	&\int_{u \sim \mu} \int_\mu \cdots \int_\mu
	\mathbb{P}\!\Big(
	B_{T_{k-1}+ u} \ge b + x
	\Big)
	\mathbb{P}\left(0\leq B_{T_1},...,B_{T_{k-1}}\leq b\right)dP_{U_1}(u_1)\, \cdots\, dP_{U_{k-1}}(u_{k-1})\, d\mu(u)\,,
\end{align*}

	where $dP_{U_i}(u_i)$ denotes the distribution measure of $U_i$ and the outer integration is over $u$ distributed according to $\mu$ (i.e. $u = U_k$).
	
	Dividing both sides by $\mathbb{P}(\tau_0 \ge \tau_{\rho_b})$ to normalize the conditional probability, we obtain
\begin{align*}
	&\mathbb{P}(R_1 \ge x \mid \tau_0 \ge \tau_{\rho_b}) = \\
	&\frac{1}{\mathbb{P}(\tau_0 \ge \tau_{\rho_b})}
	\sum_{k=1}^\infty
	\int_{u \sim \mu}
	\int_\mu \cdots \int_\mu
	\mathbb{P}\!\Big( B_{T_{k-1}+ u} \ge b + x \Big) \times \\
	&\mathbb{P}\left(0\leq B_{T_1},...,B_{T_{k-1}}\leq b\right)dP_{U_1}(u_1)\, \cdots\, dP_{U_{k-1}}(u_{k-1})\, d\mu(u)\,.
\end{align*}

	To bound the integrand, consider 
	\[
	\mathbb{P}\!\Big( B_{T_{k-1}+ u} \ge b + x \Big)\,.
	\] 
	Using properties of Brownian motion, this probability can be bounded by splitting at the hitting time of $b$. Specifically,
\begin{align*}
	&\mathbb{P}\!\Big( B_{T_{k-1}+ u} \ge b + x \Big)
	\le \\
	&\mathbb{P}\!\Big( \sup_{t \in [\,\sum_{i=0}^{k-1} u_i,\; T_{k-1}+ u\,]} B_t \ge b \Big)
	\;\cdot\;
	\mathbb{P}\!\Big( |B_{T_{k-1}+ u} - B_{\tau_{b,k}}| \ge x \,\Big|\, B_{\tau_{b,k}} = b \Big)\,,
\end{align*}

	where $\tau_{b,k}$ denotes the first time the Brownian motion hits level $b$ during the interval $\big[\sum_{i=0}^{k-1} u_i,\; T_{k-1}+ u\big]$. The term $|B_{T_{k-1}+ u} - B_{\tau_{b,k}}|$ represents the overshoot beyond $b$. Since $\tau_{b,k} \ge \sum_{i=0}^{k-1} u_i$ by definition (the process cannot hit $b$ before the interval starts), the remaining time after hitting $b$ is at most $u$. Therefore, the overshoot is stochastically dominated by a folded normal with variance $u$. In particular,
	\[
	\mathbb{P}\!\Big(|B_{T_{k-1}+ u} - b| \ge x \Big) \;\le\; \mathbb{P}\!\big(|\mathcal{N}(0, u)| \ge x\big)\,.
	\]
	
	Substituting this bound into the previous expression, we get
\begin{align*}
	&\mathbb{P}(R_1 \ge x \mid \tau_0 \ge \tau_{\rho_b}) \le \\[2pt]
	&\frac{1}{\mathbb{P}(\tau_0 \ge \tau_{\rho_b})}
	\sum_{k=1}^\infty
	\int_{u \sim \mu}
	\mathbb{P}\bigl(|\mathcal{N}(0,u)| \ge x\bigr)\, \times \\
	&\Bigl(
	\int_\mu \!\cdots\! \int_\mu\mathbb{P}\left(0\leq B_{T_1},...,B_{T_{k-1}}\leq b\right)
	dP_{U_1}(u_1)\,\cdots\,dP_{U_{k-1}}(u_{k-1})
	\Bigr)\,
	d\mu(u)\,.
\end{align*}

	The inner $(k-1)$ integrals together with the summation $\sum_{k=1}^\infty$ yield $\mathbb{P}(\tau_0 \ge \tau_{\rho_b})$, canceling out the normalizing factor. Thus, 
	\[
	\mathbb{P}(R_1 \ge x \mid \tau_0 \ge \tau_{\rho_b}) \;\le\; \int_{u \sim \mu} \mathbb{P}(|\mathcal{N}(0, u)| \ge x)\, d\mu(u) \;=\; \mathbb{P}(|\mathcal{N}(0,\mu)| \ge x)\,. 
	\]
	
	This shows that \(R_1\) is stochastically dominated by \(|\mathcal{N}(0,\mu)|\). By symmetry, the same argument applies to \(R_1\). Since the folded normal \( |\mathcal{N}(0,\mu)| \) has first moment \( \sqrt{\frac{2}{\pi}} \) and second moment \( u_2 \), it follows that 
	\[ 
	r_1, o_1 \le \sqrt{\frac{2}{\pi}}, \qquad r_2, o_2 \le u_2\,.
	\]
	This completes the proof.
\end{proof}

\end{document}